\documentclass{article}

\usepackage{arxiv}

\usepackage[utf8]{inputenc} 
\usepackage[T1]{fontenc}    
\usepackage{hyperref}       
\usepackage{url}            
\usepackage{booktabs}       
\usepackage{amsfonts}       
\usepackage{nicefrac}       
\usepackage{microtype}      
\usepackage{lipsum}
\usepackage{graphicx}
\graphicspath{ {./images/} }

\usepackage{cite}
\usepackage{amsmath,amssymb,amsfonts}
\usepackage{algorithmic}
\usepackage{graphicx}
\usepackage{textcomp}
\usepackage{booktabs}
\usepackage{amsmath}
\usepackage{pdfpages}
\usepackage{hyperref}
\usepackage{amsthm}
\usepackage[ruled,vlined,linesnumbered,resetcount]{algorithm2e}
\theoremstyle{plain}
\newtheorem{thm}{Theorem} 

\theoremstyle{definition}
\newtheorem{defn}{Definition}  
\newtheorem{exmp}{Example}  
\newtheorem{problem}{Problem}
\newtheorem{prop}{Proposition}

\title{PrivGen: Preserving Privacy of Sequences Through Data Generation}

\author{
 Sigal Shaked \\
 Dept. of Software and Information Systems Eng.\\ Ben-Gurion University of the Negev\\ Be'er Sheva, Israel\\
  \texttt{shaksi@post.bgu.ac.il} \\
   \And
   Lior Rokach \\
 Dept. of Software and Information Systems Eng.\\ Ben-Gurion University of the Negev\\ Be'er Sheva, Israel\\
 \texttt{liorrk@bgu.ac.il} \\
}

\begin{document}
\maketitle
\begin{abstract}
Sequential data is everywhere, and it can serve as a basis for research that will lead to improved processes. For example, road infrastructure can be improved by identifying bottlenecks in GPS data, or early diagnosis can be improved by analyzing patterns of disease progression in medical data. The main obstacle is that access and use of such data is usually limited or not permitted at all due to concerns about violating user privacy, and rightly so. Anonymizing sequence data is not a simple task, since a user creates an almost unique signature over time. Existing anonymization methods reduce the quality of information in order to maintain the level of anonymity required. Damage to quality may disrupt patterns that appear in the original data and impair the preservation of various characteristics. Since in many cases the researcher does not need the data as is and instead is only interested in the patterns that exist in the data, we propose PrivGen, an innovative method for generating data that maintains patterns and characteristics of the source data. We demonstrate that the data generation mechanism significantly limits the risk of privacy infringement. Evaluating our method with real-world datasets shows that its generated data preserves many characteristics of the data, including the sequential model, as trained based on the source data. This suggests that the data generated by our method could be used in place of actual data for various types of analysis, maintaining user privacy and the data’s integrity at the same time. 
\end{abstract}

\keywords{Data generation, Frequent pattern mining, Markov model, Privacy preserving data publication}

\section{Introduction}
\label{sec:introduction}
Data sharing is essential for leveraging the opportunities associated with the flood of big data. Publishing sequential datasets results in vast amounts of data that can be utilized for learning about the behavior of individuals, populations, and trends. In some cases, the exact data is required, and in other cases, only the characteristics of the data are important \cite{wang2010privacy}. For example, a researcher who examines the effect of a drug on patients will need the actual data in order to identify other side effects experienced by relevant patients. In contrast, a telecom company that collects user device locations but cannot use them due to existing regulations can use other data with similar characteristics to help improve various processes within the company, including the positioning of antennas and customization of services to subscribers. With the emergence of various applications, such as navigation systems, electronic medical record (EMR) systems \cite{gkoulalas2014publishing}, and Web search engines, the amount of sequential data collected is enormous. Despite the potential benefits, the current climate of threats and attackers, and limited privacy protection, does not encourage the publication and sharing of data.

In addition, there is pervasive fear and concern about potential attackers that are becoming increasingly sophisticated and gaining access to user data. Real-world examples include recent attempts to cross trajectories collected by mobile operators with credit card records \cite{riederer2016linking}, geographic check-ins of Twitter and Flickr users \cite{cecaj2014re}, and Google Places, Yelp, and Facebook metadata \cite{mayer2016evaluating}. Such attacks make data owners fearful of taking the risk and cause them to refrain from publishing their data.

In this atmosphere, data protection legislation in Europe was updated, by replacing the 1995 data protection provision with the General Data Protection Regulation (GDPR). The GDPR responds to threats of privacy and contributes to data protection by strengthening individual protections and increases the need for robust solutions that enable data sharing while ensuring individuals' privacy.

Various anonymization methods have been suggested to address privacy threats. K-anonymity based techniques make basic assumptions about the prior knowledge the attacker possesses and thereby provide just limited privacy protection. Protection is even more limited in the case of sequential data, where the sequence itself, which is often multi-dimensional, is a quasi-identifier. For example, an attacker that knows the locations visited by a certain user may identify the user by searching for users that visited those locations. Differential privacy-based techniques provide wider protection as no assumptions are made regarding the attacker’s advance knowledge, but their practical application may have a negative impact on data utility, due to the loss of information caused by the data transformation. 

Synthetic data can be thought of as "fake" data created from "real" data. The beauty of it comes from its foundation in real data and real distributions, which makes it almost indistinguishable from the original data. The "usefulness" of synthetic data has been validated by studies like  \cite{bellovin2019privacy,patki2016synthetic,choi2017generating,dutta2018simulated}. Its momentum, in the context of privacy, stems from the fact that many times it is legally impossible to share real data, but in fact, anonymized data is insufficiently useful. At these moments, establishing a synthetic dataset may present the best solution of both worlds - data that can be shared and yet resembles the original.

We propose a synthetic data generation approach for data publication: PrivGen, a data generation-based approach which is appropriate for scenarios where real details are not needed, and data that behaves like the original data is acceptable. When publishing synthesized datasets, the data publisher must ensure that no sensitive information about specific individuals is disclosed. Synthetic datasets intuitively seem to mitigate the re-identification problem, since published records are invented and do not derive from any original record \cite{hundepool2012statistical}. Whereas there is a concern about membership inference (\cite{patki2016synthetic, yeom2018privacy, long2018understanding} ), i.e. knowing that a single item exists in the source data, it would not be possible to link a synthesized item to other facts in a meaningful way. However, to the best of our knowledge, the assumption that synthetic datasets are immune to linkage attacks has not yet been formally proven. In Section~\ref{section3}, we supply such a proof; we rely on the uniqueness and reconstruction probability of the generated sequences and utilize the inverse relationship between the two to limit the existence of rare combinations and demonstrate that generating data using PrivGen provides privacy protection.

PrivGen is effective for privacy preserving data publication, as well as simulation, testing, and forecasting purposes. Unlike other perturbation methods, the suggested method preserves sequential patterns and various levels of time periodicity. It is generic and enables modeling different types of dimensions, including the time dimension, the state dimension which creates the sequence of events over time, and possibly other dimensions that characterize the objects in the data.

We make the following contributions in this paper: 1) Proposing PrivGen, a method for sequence generation, and demonstrating its ability to maintain the sequential patterns as well as various attribute statistics, while maintaining privacy, 2) demonstrating how using PrivGen provides protection against data privacy attacks, 3) providing a comparison between different machine learning methods for the task of extracting sequence patterns, and 4) conducting extensive experiments to demonstrate the superiority of PrivGen over state of the art techniques.

The rest of the paper is organized as follows: in Section~\ref{section2} related work is summarized. In Section~\ref{section3} preliminary definitions are provided, as well as proof that the method preserves privacy. The feature engineering performed is discussed in Section~\ref{section4}. Section~\ref{section5} describes the model, and Section~\ref{section6} presents the data generation algorithm. Section~\ref{section7} contains an analysis of the method's time complexity, and in Section~\ref{section8} we discuss experimental results. Finally, Section~\ref{section9} concludes this work.

\section{Related Work}
\label{section2}

Several methods have been suggested for publishing sequential data. Some proposed techniques \cite{chen2013privacy,mohammed2010preserving,Ghasemzadeh2014,Pensa2008,amiri2016hierarchical,gramaglia2017k,zhang2018towards,dong2018novel} adopt solutions based on the k-anonymity approach, such as LKC – anonymity, that make some adjustments to the k-anonymity approach to adapt to limitations deriving from the sequential nature of the data. Unfortunately, k-anonymity has been shown to be ineffective against many types of attacks \cite{machanavajjhala2006diversity,shokri2011quantifying}, and therefore most of the research performed has primarily aimed at publishing sequence data under differential privacy constraints \cite{dwork2018privacy,mcmahan2017learning,abadi2016deep, bindschaedler2017plausible,Shaked2016}, which is currently the most widely accepted formal mathematical model that ensures privacy protection \cite{yang2017survey}. Zhang et al.~\cite{zhang2017privbayes}, for example, present PrivBayes, a differential generative model that decomposes high dimensional data into low dimensional marginals by constructing a Bayesian network. Differential privacy is achieved by injecting noise into the low dimensional marginals and synthetic data is generated according to these noised marginals. 

Chen et al. propose two algorithms for publishing sanitized sequential data from which frequent sequence patterns can be mined. They build the synthetic datasets based on 1) a noisy prefix tree that combines sequences with identical prefixes for the same branch \cite{chen2012differentially_1}, and 2) a variable n-gram model \cite{Chen2012}, which can extract the essential information of the datasets in order to adjust the scales of injected noise. The PrivTree technique \cite{zhang2016privtree} uses a representation of a variable length Markov chain model to overcome the need to set some predefined threshold for the recursion depth, as required by the previously mentioned techniques.

Alongside these solutions, the use of deep architectures for training a generative model has been increasing in recent years. DP-SYN \cite{abay2018privacy,frigerio2019differentially} is a differentially private deep learning based synthetic data generation technique; it first partitions the original data into groups, and then employs the private auto-encoder for each group. The auto-encoder learns the latent structure of each group and uses expectation maximization algorithm to simulate the groups. The authors demonstrated how deep learning models have been trained to successfully capture relationship among multiple features, and then used to generate differentially private synthetic data. This method was shown to handle continuous, categorical and time-series data, but it does not support the problem that we aim to solve in this study, which is synthesizing multi-dimensional data with a sequence of discrete events. Choi et al.~\cite{choi2017generating} generated synthetic electronic health records (EHR) with a GAN framework. These EHRs contain binary and discrete variables, such as drugs and procedure codes. The authors defined an algorithm to generate high-dimensional multi-label (but non-sequential) discrete samples by combining an autoencoder with a GAN, which they call medGAN. However, medGAN preserves counting queries rather than sequences. Yu et al.~\cite{yu2017seqgan} developed SeqGAN for the generation of sequences of discrete data. SeqGAN is based on a reinforcement learning training where the reward signal is produced by the discriminator. They demonstrated its use for text and music generation. SeqGAN, however, handles one dimensional data instead of multi-dimensional data (both numeric and categorical) as we aim to support in this study. These recent methods do not support the complex nature of the data we wish to synthesize with PrivGen (multidimensional data that includes a sequence dimension), and therefore we can neither use them for the problem we strive to solve, nor compare them to PrivGen. Adapting each of these methods to such complex data is an interesting future area of research.

While the concept of differential privacy has received considerable attention, in reality, it is accompanied by quite a few difficulties; Lee et al. \cite{lee2011much} have shown that establishing the privacy budget ($\epsilon$) requires knowledge of the queries to be computed, the universe of data, and the subset of that universe to be queried. They also show that given a goal of controlling probabilistic disclosure of the presence of an individual, the proper $\epsilon$ varies depending on individual values, and that the presence of outliers influences the budget.

The risk of privacy violation exists when disclosure allows an adversary to calculate a high enough probability that the individual is in the dataset. With a sufficient privacy budget, differential privacy ensures privacy protection to the individual, limiting the risk of privacy violation. However, by applying a differential privacy technique that relies on strong privacy transformations, we run the risk of needlessly harming the data’s quality. As stressed in \cite{soria2017individual}, the technique of differential privacy may seriously and unjustifiably damage data, as the computation of the global sensitivity does not consider the actual data to be protected. Recent personalized differential privacy-based solutions try to reduce the amount of information loss by determining the minimum amount of transformations needed for satisfying the privacy requirements of each moving object. In the PPTD approach \cite{komishani2016pptd}, which also aim at reducing the amount of information loss, the disclosure risk is used as a metric to measure the privacy breach probability of moving objects and determine the minimum amount of necessary generalization and local suppression for satisfying the privacy requirements of each moving object.

Due to the aforementioned limitations of differential privacy, new approaches have recently been suggested that quantitatively try to manage privacy risks. These techniques also provide a kind of personalized privacy. Dankar and Emam \cite{dankar2010method} introduce a re-identification risk metric that measures the proportion of records that are correctly re-identified in a dataset by an adversary, based on the idea that the adversary wishes to re-identify as many records as possible in the published dataset. Tools like ARX \cite{prasser2016lightning,bild2018safepub} and PRUDEnce \cite{pratesi2018prudence} enable the analysis of the risk of re-identification and the application of various privacy paradigms. They define privacy risk as the probability that a specific user is re-identified in a specific published dataset, i.e., the probability that a specific user's identity is correctly associated with the user's own data, and also define empirical privacy risk as the distribution of the risk, i.e., re-identification probability, over the entire population of users represented in the dataset. These tools search among possible transformations to select safe transformations that minimize utility loss. The authors of \cite{trabelsi2009data} evaluate the personal data disclosure risk using an entropy-based method that enables the estimation of the rareness of some particular values within a population. Most of these techniques apply a privacy preserving transformation to eliminate the users with a risk higher than a certain threshold, ensuring that all users in the sanitized data fall below that risk. 

In a spirit similar to the one behind the personalized privacy concept, we take advantage of the fact that the probability of sequence reconstruction is inverse to the sequence's frequency within data and demonstrate that the use of generated sequence data ensures privacy. Existing perturbation-based techniques that aim at generating privacy preserving synthetic data are usually based on some condensing that preserves the statistical properties of the original data, like microaggregation \cite{aggarwal2008static,soria2018differentially} or prefix trees \cite{chen2012differentially_1,Chen2012}. To better capture sequence patterns in different periodicities, we examine condensing the original data into machine learning models for predicting both time and events according to various derived features extracted from both the time and sequence of events, as well as according to other features when available. 

The proposed PrivGen method bridges gaps such as the need to change the data resolution for generalization-based methods or the need to choose an appropriate privacy budget for differential privacy-based solutions, and it is only suitable for scenarios where individual sequences need not be truthful, and the emphasis is mainly on maintaining core statistics.
\section{Preliminaries and Disclosure Risk in Synthesized Data}
\label{section3}
Our input space $X$ is of the form $X= o, a_1,\ldots a_m,(t_1, s_1),$ $\ldots (t_i, s_i)$, where $o$ is the attached identifier of the sequence owner, $a_1,\ldots a_m$ are possible influencing attributes, and $(t_i, s_i)$ are time and state pairs that the owner is attached to. This also can be viewed as a sequence of states (or events) $S_j = s_1 s_2 \ldots s_i$ ordered by $t_i$. 

A sequential dataset $D$ of size $\left |D\right |$ is composed of a multiset of sequences $D = \{S_1,S_2,\ldots,S_{\left |D\right |}\}$. 

\begin{exmp}
\label{exmp:e1}
Let $X =$ \\
\{(1, '$M$', 45, 1/2/2017 5:45, '$Hospitalization$ $in$ $ER$'),\\
(1, '$M$', 45, 2/2/2017 15:00, '$Release$ $from$ $ER$'),\\
(2, '$F$', 45, 1/2/2017 6:40, '$Hospitalization$ $in$ $ER$')\} \\
be an input dataset. There are two influencing attributes in this example: the first is gender, and the second is age. The state relates to the type of transition between hospital units (emergency room hospitalization or release in this example). 
\end{exmp}

\begin{problem}[Publishing Anonymized Generated Data Problem]
Given an input dataset $D = \{S_1,S_2,\ldots,S_{\left |D\right |}\}$, create an output dataset ${D}' = \{{S}'_1,{S}'_2,\ldots,{S}'_{\left |{D}'\right |}\}$, where ${D}'$ preserves utility and provides anonymity as defined later in this section.
\end{problem}

We guard against identity disclosure, also known as a linkage attack, which occurs when the attacker is able to associate a record in the released dataset with the individual that it originated from and then can gain knowledge regarding additional confidential attributes.

\begin{defn}[attack model]
An attacker with some external knowledge might try to perform record linkage in order to extract more information about objects that appear in the dataset. To protect against identity linkage, we define a quasi-identifier (QID), which is the set of attributes that may be used for linking a record with external knowledge. In the simplified problem, we define the QID as $\{a_1,a_2,\ldots,a_m,s_1\}$, although it is easy to rely on the same principles to support more components or even an entire sequence $\{a_1,a_2,\ldots,a_m,t_1,s_1,t_2,s_2,\ldots\}$
\end{defn}

Measuring disclosure risk is based on the evaluation of the probability of correct re-identification of individuals in the released data. In general, the rarer a combination of values of the quasi-identifiers of an observation in the sample, the higher the risk of identity disclosure. An attacker that tries to match an individual who has a relatively rare quasi-identifier within the released data with an external dataset in which the same quasi-identifier exists will have a higher probability of finding a correct match than when a larger number of individuals share the same QID \cite{benschop2018statistical}. We use two main concepts to measure identity disclosure risk: uniqueness and reconstruction probability. Uniqueness relates to how rare combinations of feature values are in the original dataset, and reconstruction probability estimates the probability that combinations of feature values in the original data will appear in the released data. Using the inverse relationship between the uniqueness of a combination and its reconstruction probability, we demonstrate a low probability of rare combinations in the released data. This makes the proposed method immune to linkage attacks and useful for protecting the privacy of individuals.

\begin{defn}[reconstruction probability]
We define reconstruction probability as the probability that a given QID from the source data ($D$) appears in the generated data (${D}'$).

Let's define $Y$ as follows:
\begin{equation}
y_i =\begin{cases}1& QID(x_i) \subseteq  QID({D}')\\0& otherwise\end{cases}  
\end{equation}
Assuming that influencing attributes are independent, and the state dimension depends on all other features, the distribution across $X \times Y$ is:
\begin{equation}
\label{eq:eq2}
\begin{split}
& P(y_i=1 | x_i ) = \\
& p(x_i.{a_1} )\cdot\ldots\cdot p(x_i.{a_m} )\cdot p( x_i.s | x_i.{a_1},\ldots,x_i.{a_m})
\end{split}
\end{equation}
\end{defn}

Each element in this calculation refers to the probability of generating a certain value for one dimension of the QID, where the multiplication calculates the probability that all attributes have been successfully reconstructed to the values of a given record.

The reconstruction probability for a record $x_i$ is $P(y_i=1|x_i)$.

\begin{prop}[reconstruction probability]
Let's denote our suggested algorithm as $A$. Algorithm $A$ is based on estimating $p(x_i.a_1 )\cdot\ldots(x_i.a_m )$ according to the distribution of the possible discretized values of the attribute in a given dataset $D$ and an additional model for each dimension. In the simplified model we use for this proof there is an additional dimension for $s_1$, estimated by algorithm $A$ using a model $c_1$ with a generalization error $\epsilon_1$.

The reconstruction probability is therefore estimated as:
\begin{equation}
\label{eq:eq3}
\begin{aligned}
& P_A (y_i=1 | x_i )=\\
& p(x_i.{a_1} )\cdot \ldots \cdot p(x_i.{a_m} )\cdot c_s( x_i.{s}|x_i.{a_1}, \ldots ,x_i.{a_m} )=\\
& p(x_i.{a_1} ) \cdot \ldots \cdot p(x_i.{a_m} ) \cdot p( x_i.s  |  x_i.{a_1}, \ldots ,x_i.{a_m} )\\
& \cdot(1- {\epsilon}_1 ) 
\end{aligned}
\end{equation}
\end{prop}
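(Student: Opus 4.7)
The plan is to derive the formula in a direct, expository manner, treating the proposition as a reformulation of equation~(\ref{eq:eq2}) under the specific estimators employed by algorithm $A$ rather than as a deep structural theorem. First I would recall the target expression: by the definition of reconstruction probability and the conditional-independence assumption already invoked in equation~(\ref{eq:eq2}), the true probability factors as a product of marginals $p(x_i.a_j)$ over the influencing attributes, multiplied by a conditional $p(x_i.s\mid x_i.a_1,\ldots,x_i.a_m)$ for the state dimension. This serves as the baseline to which the algorithmic estimate must be compared.

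Next I would replace each factor by what algorithm $A$ actually generates. For the influencing attributes, $A$ samples from the empirical distribution of discretized values in $D$, so the corresponding marginal factors are estimated unbiasedly (modulo finite-sample effects, which I would mention as neglected in this simplified model) by the same quantities $p(x_i.a_j)$. The only factor whose estimator differs structurally from the ground-truth distribution is the state factor, which is produced by the classifier $c_s$ built inside $A$.

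The core step is then to relate $c_s(x_i.s\mid x_i.a_1,\ldots,x_i.a_m)$ to the true conditional $p(x_i.s\mid x_i.a_1,\ldots,x_i.a_m)$ through the generalization error $\epsilon_1$. Here I would interpret $\epsilon_1$ as the expected probability that $c_s$ fails to reproduce the correct state for an input drawn from the same distribution as $x_i$, so that the probability of the classifier actually assigning $x_i.s$ conditioned on the true state being $x_i.s$ is $(1-\epsilon_1)$. Chaining this with the ground-truth conditional yields $c_s(x_i.s\mid x_i.a_1,\ldots,x_i.a_m) = p(x_i.s\mid x_i.a_1,\ldots,x_i.a_m)\cdot(1-\epsilon_1)$. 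Substituting into the factored expression gives the claimed formula in equation~(\ref{eq:eq3}).

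The main obstacle is the middle of the third step: \emph{formalizing} what it means for the generalization error to act as a multiplicative factor on the conditional probability. A strict reading of generalization error (expected $0/1$ loss over the joint distribution) is weaker than what is needed, since one must translate an aggregate loss into a per-instance multiplicative factor on $p(x_i.s\mid\cdot)$. I would handle this by explicitly stating the assumption that $\epsilon_1$ is the per-query miss probability averaged consistently across the QID domain, and that $c_s$ is treated as a calibrated estimator whose expected reconstruction accuracy on a true state is $(1-\epsilon_1)$. With that assumption made explicit, the remainder of the derivation is algebraic bookkeeping.
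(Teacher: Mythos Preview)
Your proposal is sound, and in fact it is more careful than the paper itself. In the paper this proposition is not given a separate proof: the derivation is simply the chain of equalities displayed in equation~(\ref{eq:eq3}), where the first equality records what algorithm $A$ computes (empirical marginals for the $a_j$ and the classifier output $c_s$ for the state), and the second equality substitutes $c_s(\cdot)=p(\cdot)\,(1-\epsilon_1)$ without further justification. Your three-step unpacking is exactly this reasoning made explicit, and your flagged obstacle---that treating $\epsilon_1$ as a uniform multiplicative discount on the per-instance conditional requires an additional calibration assumption---is a genuine point that the paper glosses over rather than resolves. So you are aligned with the paper's approach; you have simply articulated the hidden assumption that the paper leaves implicit.
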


\begin{defn}[uniqueness]
We define uniqueness as the opposite of the frequency of $QID(x_i)$ in $D$. Since the frequency of $QID(x_i)$ is $P(y_i=1|x_i )$, we use the complementary frequency $1-P(y_i=1|x_i )$ for calculating uniqueness. The higher the uniqueness value, the easier it will be for an attacker to link external data to the record owner, as less owners share this QID.
\end{defn}
\begin{defn}[re-identification Risk]
\label{def:identificationrisk}
We define the re-identification risk as the probability that a given QID will be successfully used for re-identifying a generated record. This probability is composed of the probability that a QID will appear in ${D}'$ and its frequency within the data, as rare QIDs will be linked to less possible owners. 
\begin{equation}
\label{eq:identificationrisk}
\begin{aligned}
&ReIdentificationRisk=\\ 
&reconstructionProbability\cdot uniqueness=\\
&P(y_i=1 | x_i )\cdot (1-P(y_i=1 | x_i ))
\end{aligned}
\end{equation}
\end{defn}
\begin{exmp}
Let $X =$ \\
\{(1, '$M$', 45, 1/2/2017 5:45, '$Hospitalization$ $in$ $ER$'),\\
(2, '$M$', 45, 2/2/2017 15:03, '$Hospitalization$ $in$ $ER$'),\\
(3, '$M$', 45, 3/2/2017 12:23, '$Hospitalization$ $in$ $ER$')\}.

The reconstruction probability for record $x_1$ is $P(y_1=1 | x_1 )$= $p('M')\cdot p($45$)\cdot$ p$($'Hospitalization in ER'$ |$'M'$,45)$= $3/3 \cdot 3/3 \cdot3/3$ = 1. We obtain the following: $reconstructionProbability=1$ and $uniqueness=0$ for $x_1$. Although record $x_1$ was successfully reconstructed in ${D}'$, it has $n$ possible sources, and an attacker could not distinguish between $n$ possible owners. The re-identification risk for record $x_1$ is $1\cdot 0=0$.

The reconstruction probability for record (4, '$M$', 45, 1/2/2017 5:45, '$Release from ER$') is $P(y_1=1 | x_1 )$=$p('M')\cdot p(45)\cdot p($'Release from ER'$ | $'M'$,45)$=$3/3\cdot 3/3\cdot 0/3$= 0, as the probability for at least one of its attributes is zero. That means that this record does not appear in ${D}'$, so it may not be used for identifying an owner. The re-identification risk in this case is zero.
\end{exmp}

\begin{exmp}
Let $X$ =\\
\{(1, '$M$', 45, 1/2/2017 5:45, '$Hospitalization$ $in$ $ER$'),\\
(1, '$M$', 45, 2/2/2017 15:03, '$Hospitalization$ $in$ $ER$'),\\
(2, '$M$', 45, 2/2/2017 15:03, '$Hospitalization$ $in$ $ER$'), \\
(2, '$M$', 45, 3/2/2017 12:23, '$Release$ $from$ $ER$'),\\
(3, '$F$', 16, 2/2/2017 17:03, '$Hospitalization$ $in$ $ER$'),\\
(3, '$F$', 16, 3/2/2017 12:56, '$Release$ $from$ $ER$')\}.
The reconstruction probability for record $x_1$ is $P(y_1=1 | x_1 )$=$p('M')\cdot (45)\cdot p('Hospitalization$ $in$ $ER'|'M',45)$= $2/3\cdot 2/3\cdot 2/2$= 0.44. Note that influencing attributes are common to all records of an owner, and therefore their probability is calculated according to the number of owners instead of the number of records. We obtain the following: $reconstruction-$ $Probability = 0.44$ and $uniqueness =0.56$ for $x_1$, and its re-identification risk is $0.44\cdot 0.56=0.2464$.
\end{exmp}

\begin{thm}[maximal re-identification risk]
\label{thm:maxIdentificationrisk}
Let $M(D)$ be a data generation method that was trained based on a source dataset $D$ and provides reconstruction probability p. Let ${D}'$ be a dataset generated using M(D). Then, the re-identification risk when publishing ${D}'$ is 0.25 or less.
\end{thm}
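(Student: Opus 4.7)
The plan is to reduce the claim to a one-variable calculus problem on the re-identification risk expression given in Definition~\ref{def:identificationrisk}. Specifically, by that definition, for any record $x_i$ with reconstruction probability $p := P(y_i=1\mid x_i)$, the re-identification risk is
\begin{equation*}
R(p) \;=\; P(y_i=1\mid x_i)\cdot\bigl(1-P(y_i=1\mid x_i)\bigr) \;=\; p(1-p).
\end{equation*}
So the theorem reduces to showing that $\max_{p\in[0,1]} p(1-p) \le 1/4$, and that this bound actually applies to the algorithm $M(D)$.

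First I would verify that $p$ really does lie in $[0,1]$ under the method $M$. This follows from Proposition~1 (the reconstruction probability formula from Equation~\ref{eq:eq3}): each factor $p(x_i.a_j)$ is a probability in $[0,1]$, $p(x_i.s\mid x_i.a_1,\ldots,x_i.a_m)$ is a conditional probability in $[0,1]$, and the generalization-error factor $(1-\epsilon_1)$ lies in $[0,1]$ for any well-defined classifier. Hence $P_A(y_i=1\mid x_i)\in[0,1]$, so the domain assumption for the calculus step is justified.

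Second, I would analyze $R(p)=p(1-p)=p-p^2$ as a concave quadratic. Differentiating gives $R'(p)=1-2p$, which vanishes only at $p^*=1/2$; the second derivative $R''(p)=-2<0$ confirms a global maximum. Evaluating, $R(1/2)=1/2\cdot 1/2 = 1/4$. At the boundary points $p=0$ and $p=1$ we have $R=0$, consistent with the two sub-cases shown in the examples (a record that cannot be reconstructed at all, and a record whose QID is shared by the entire population). Therefore $R(p)\le 1/4$ for every $p\in[0,1]$.

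Since every generated record $x_i$ has $P_A(y_i=1\mid x_i)\in[0,1]$, the per-record re-identification risk is at most $0.25$, and taking the maximum (or any aggregate bounded by the per-record maximum) over all records in $D'$ yields the stated bound. The main ``obstacle,'' such as it is, is conceptual rather than technical: one must recognize that the inverse relationship between uniqueness and reconstruction probability encoded in Definition~\ref{def:identificationrisk} is precisely what forces the product $p(1-p)$ to be bounded away from $1$, so that the privacy guarantee is intrinsic to the quadratic form rather than to any specific property of the model $c_1$ or the error $\epsilon_1$.
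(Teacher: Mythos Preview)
Your proof is correct and follows essentially the same approach as the paper: both reduce to the observation that the re-identification risk equals $p(1-p)$ with $p\in[0,1]$, which is bounded above by $1/4$. Your version is more careful in justifying $p\in[0,1]$ via Equation~\ref{eq:eq3} and in carrying out the calculus explicitly, whereas the paper simply asserts the bound and refers to Fig.~\ref{fig:Fig1}; the paper additionally appends a claim that the dataset-level risk is at most $(0.25)^n$, which goes beyond the theorem statement and which you (reasonably) omit.
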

\begin{proof}
Based on Definition~\ref{def:identificationrisk}, the re-identification risk is measured as $p\cdot (1-p)$, where $p$ denotes the reconstruction probability. Fig.~\ref{fig:Fig1} displays the re-identification risk as a function of $p$. Since $0\leq p\leq 1$, the result of multiplying $p$ by $(1-p)$ ranges from zero to 0.25. Hence, it follows that for any record $x_i$ in $D$, $0\leq ReIdentificationRisk\leq 0.25$. For a dataset $D$ with $n$ records, $0\leq ReIdentificationRisk\leq(0.25)^n$.
\end{proof}
\begin{figure}[htbp!] 
\centering    
\includegraphics[width=\linewidth]{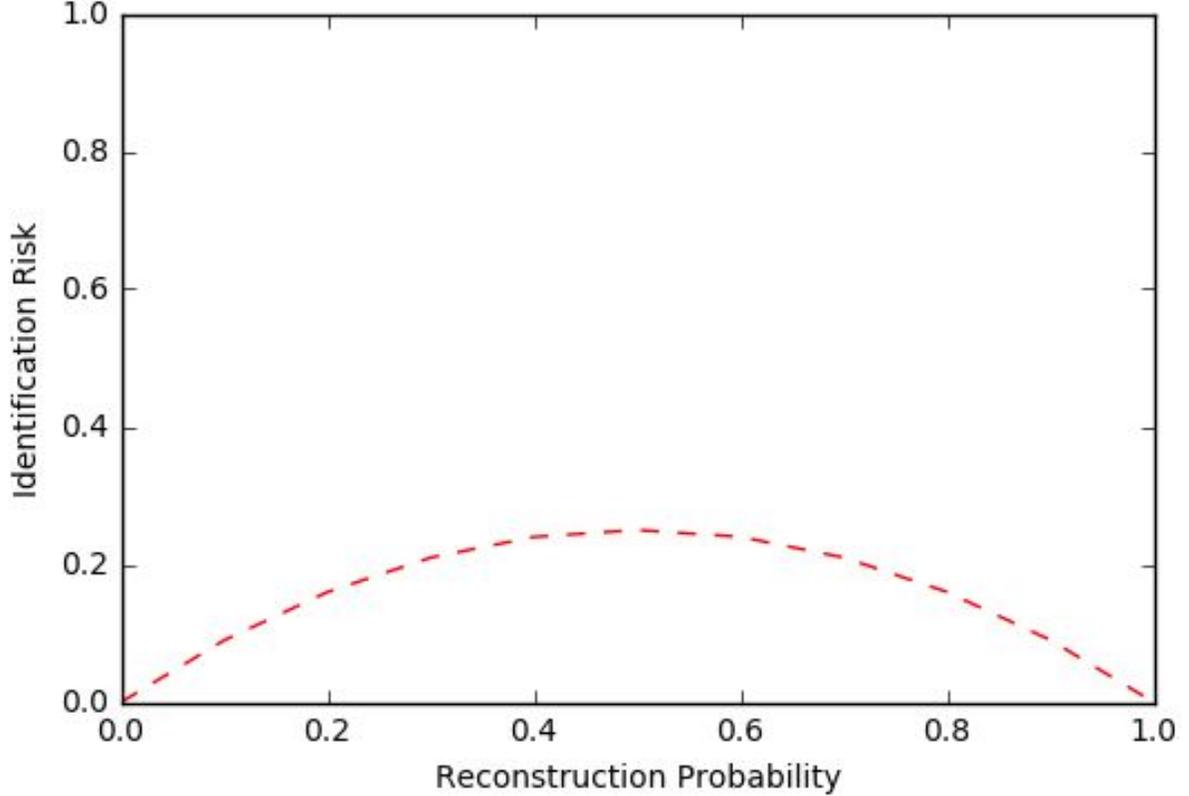}
\caption[Re-identification risk versus reconstruction probability.]{Re-identification risk versus reconstruction probability.}
\label{fig:Fig1}
\end{figure}

Theorem~\ref{thm:maxIdentificationrisk} guarantees a low identification risk when publishing synthesized data.

Fig.~\ref{fig:fig2} shows the re-identification risk versus the reconstruction probability and the error $(\epsilon_s)$, according to Eq.~\ref{eq:identificationrisk}, where reconstructionProbability is estimated by $(1-\epsilon_s) \cdot reconstructionProbability$. The reconstruction probability clearly dictates the identification risk, which is maximized for reconstructionProbability = 0.5. The error ($\epsilon_s$) has a slight influence on the identification risk, as low error provides a minor increase in the re-identification risk.

\begin{figure}[htbp!] 
\centering    
\includegraphics[width=\linewidth]{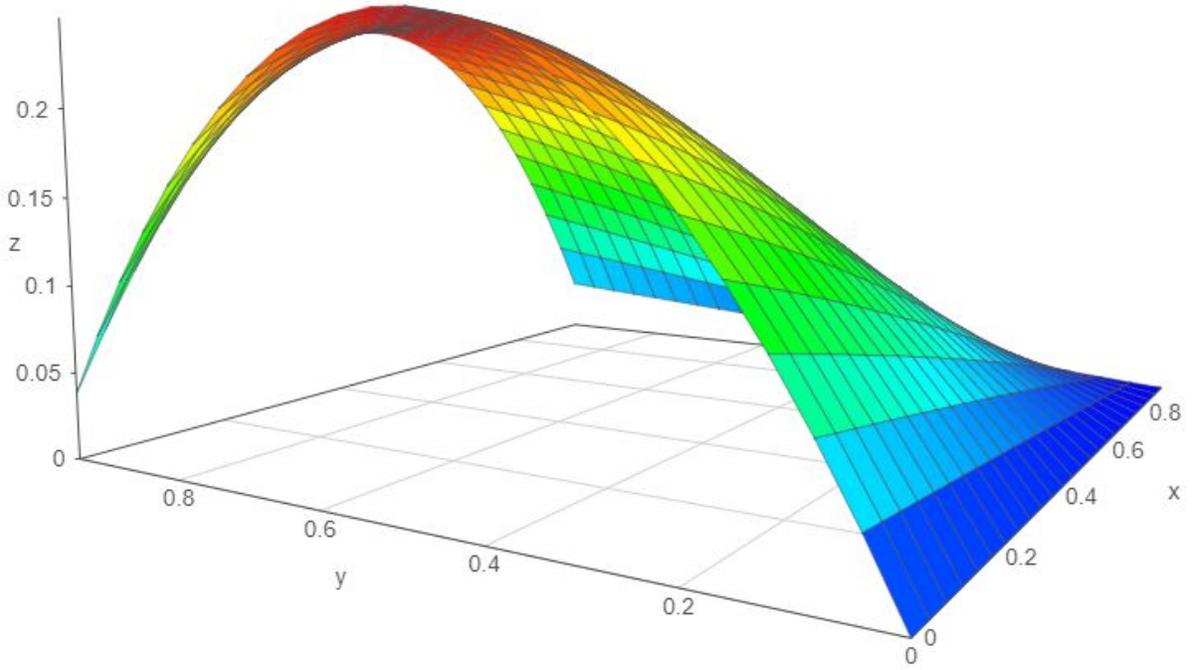}
\caption[Identification risk vs. reconstruction probability and $\epsilon_s$]{Re-identification risk ($z$) vs. reconstruction probability ($y$) and $\epsilon_s$ ($x$).}
\label{fig:fig2}
\end{figure}

As with any other data generation-based solution, the utility of the released data will not be measured at the record level but rather in the ability to produce data with characteristics similar to those of the original data, with an emphasis on similar sequence patterns and time periodicities. 

\begin{defn}[data utility]
\label{def:datautility}
Utility is estimated according to the similarity between the following tasks based on the source and released data: 1) accuracy of predicting next state based on a model trained on the original data; 2) comparing distributions for various attributes; 3) precision in identifying the top-k frequent subsequences in $D$.

As stated before, we simplified the problem to provide a simpler proof, so that the quasi-identifier only includes $\{a_1,a_2,\ldots$ $a_m,s_1\}$, but this can easily be extended to $\{a_1,a_2,\ldots a_m,t_1,s_1,$ $t_2,s_2,\ldots t_i,s_i\}$ by using a state transition model to predict the next state given the previous states and by using a time transition model to predict the next timestamp given previous times and states. This change will add additional probabilities to the product in Eq.~\ref{eq:eq3} but will not change the next steps in the proof.
\end{defn}

\section{Data Preprocessing and Feature Engineering}
\label{section4}
The PrivGen methodology consists of three main phases, as illustrated in Fig.~\ref{fig:GraphicalAbstract}: 1) Feature engineering and preprocessing, 2) Data modeling, and 3) Data generation. In this section, we discuss the feature engineering and preprocessing phases. To enhance the model, we extract additional features from the original features. The new features are derived from the time dimension and the sequential nature of the data. The new features allow the model to detect cyclicality at different levels, for example, identifying a daily pattern next to a weekly pattern. The complete list of engineered features is presented in Table~\ref{tab:engineeredfeatures}.
\begin{figure*}[htbp!] 
\centering    
\includegraphics[width=\linewidth]{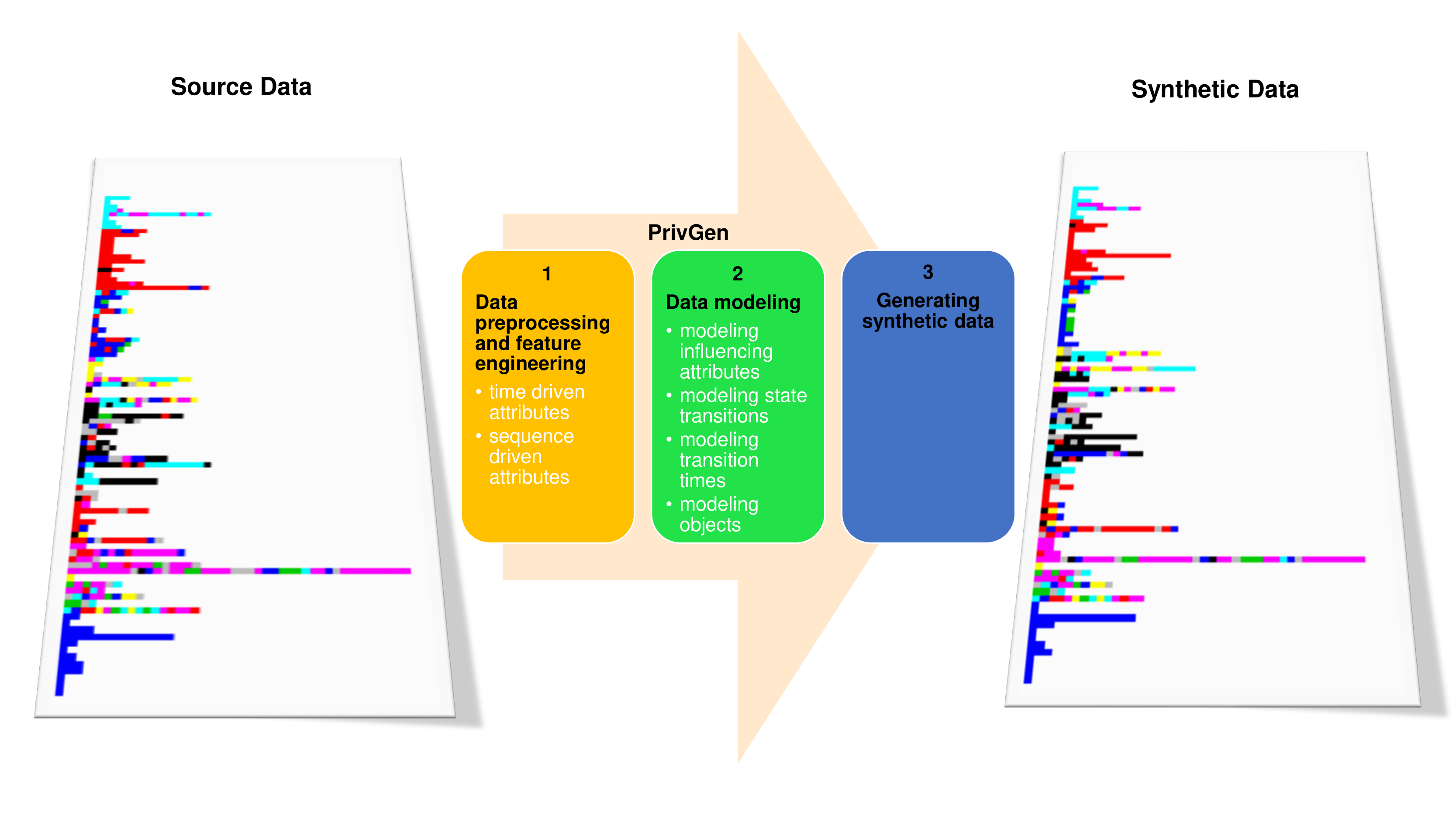}
\caption[PrivGen methodology]{PrivGen methodology.}
\label{fig:GraphicalAbstract}
\end{figure*}
\begin{table}[!htb]
\caption{Engineered features.}
\label{tab:engineeredfeatures}
\centering
\begin{tabular}{ l  p{5.5cm} } 
\toprule
\multicolumn{2}{l}{\textbf{Time Driven Attributes}}\\ 
\cmidrule{1-2}
\multicolumn{1}{l}{Feature} & \multicolumn{1}{l}{Description}  \\ 
\midrule
weekday	&day of week	\\
hour	&hour in day	\\
month	&month\\
week of year	&number of the week in the year\\
quarter	&quarter in the year\\
year	&year\\
\midrule
\multicolumn{2}{l}{\textbf{Sequence Driven Attributes}}\\ 
\cmidrule{1-2}
\multicolumn{1}{l}{Feature} & \multicolumn{1}{l}{Description}  \\ 
\midrule
seq start	&1 for the first state in a sequence, 0 otherwise\\	
seq ID	&a unique running number ID for each sequence\\	
tranTime &time (in seconds) from current state to the next state in the sequence (target)\\
cumTime	&time (in seconds) that passed from the beginning of the sequence to the current state\\
state order	&a running number that states the order of the state within the sequence\\	
$tran0,1,\ldots|W|$	&$|W|$ features for the $|W|$ previous tranTime in the sequence ($|W|$ is the sliding window size)\\
next state	&the next state in the sequence (target)\\
prev w $0,1,\ldots|E|$&$|E|\cdot |W|$ features will be extracted, one per state that appeared $w$ steps ago, to enable a one-hot representation\\
lastState$0,1,\ldots|E|$ &$|E|$ features will be extracted, one per state; each score reflects the number of occurrences of the state within the window ($|W|$) that slides in previous states \\
\bottomrule
\end{tabular}
\end{table}

\subsection[Time Driven Attributes]{Time Driven Attributes}
The time dimension contains a large amount of information that can be exploited to better understand the sequence patterns of state transitions. As an example, consider the movement patterns of an object. The hours of the day, and even the day of the week, influence the nature of the movement patterns (sleep with little movement at night, commute to work on weekday mornings, travel to less familiar places on weekends, etc.). We therefore extract a set of features from the time dimension.
\subsection[Sequence Driven Attributes]{Sequence Driven Attributes}
The sequence structure encapsulates important information that can be exploited to improve the model. The prediction of the model for each step is influenced by sequence-based features; the first event may behave differently than the fifteenth event, the time that has passed from the beginning of the sequence may influence the course of events, and so on. In addition, we derive sliding window-based features. A size five window, for example, means extracting characteristics about the previous five states in the sequence. Also, note that when extracting the target attribute (next state), we use an additional state to indicate the end of the sequence. In this way, we can predict the end of the sequence.
\subsection[Data Preprocessing]{Data Preprocessing}
Nominal features that are assumed to influence other features are discretized into a predetermined number of bins. Eventually, all nominal features are converted into binary features (in a one-hot representation) to accommodate the widest possible range of machine learning algorithms.
\section{Data Modeling}
\label{section5}
The second phase in PrivGen, which is performed upon completion of feature engineering (phase one), is data modeling. PrivGen’s model combines several components to identify characteristics of different dimensions of the data (a component to model transition times, a component to model transition between states, etc.). To support such an architecture, we must make assumptions about the existing dependencies between the different dimensions, as modeling independent dimensions differs from modeling dependent dimensions. The assumed dependencies between the various dimensions are presented in Fig.~\ref{fig:fig3}. The features derived are new dimensions extracted based on the original dimensions (time, state, and object ID): time driven features are extracted from the time dimension, and sequence driven features are extracted by rearranging the data based on the object, time, and state dimensions. The state dimension depends on the influencing attributes, the time dimension, and all of the other features extracted. The time dimension depends on all other dimensions. The object dimension is independent and does not influence any other dimension. The influencing attributes are independent but influence all of the dimensions.
\begin{figure}[htbp!] 
\centering    
\includegraphics[width=\linewidth]{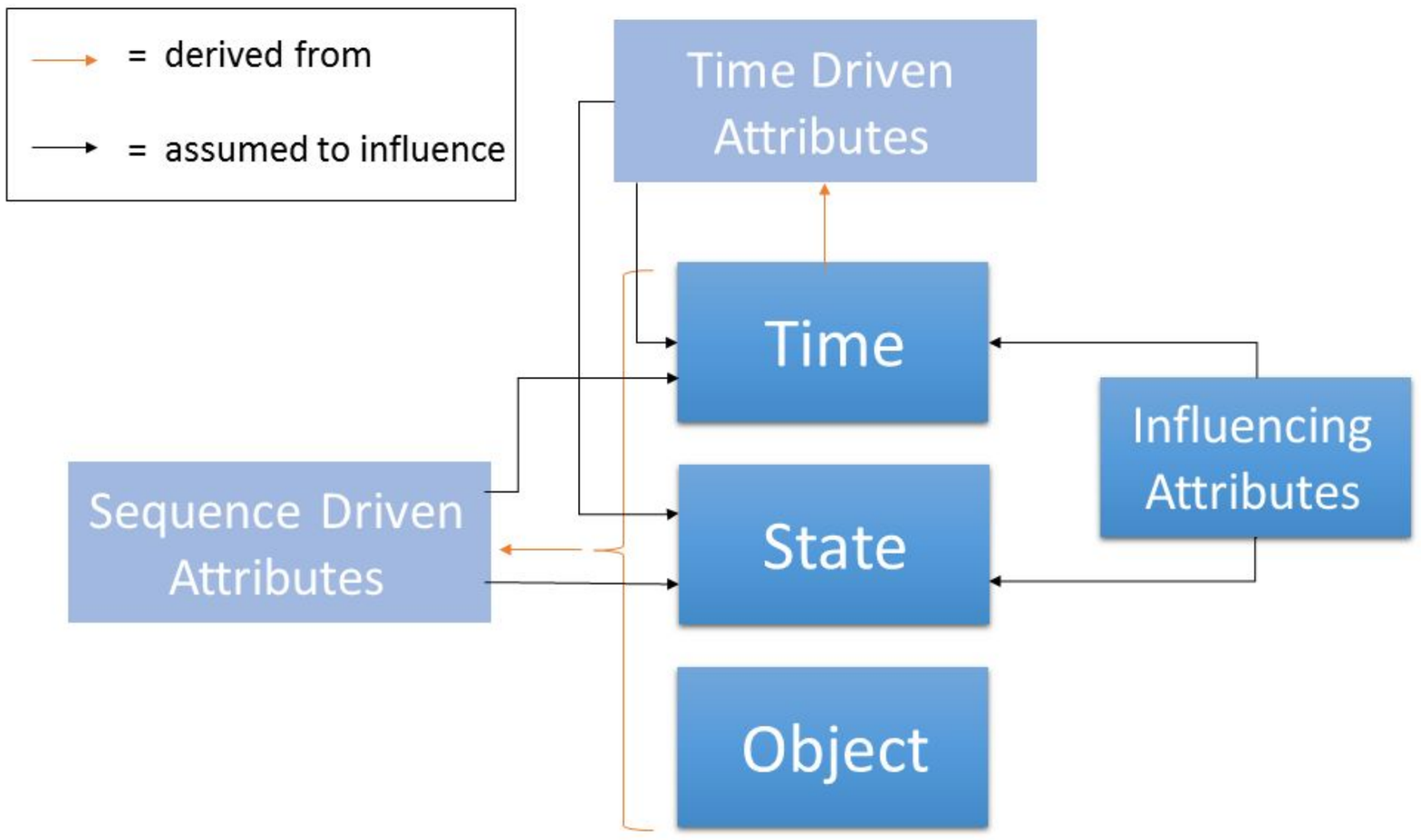}
\caption[Dependencies between data dimensions]{Dependencies between data dimensions.}
\label{fig:fig3}
\end{figure}

\subsection[Modeling Influencing Attributes]{Modeling Influencing Attributes}
Influencing attributes are features that are attached to objects and may affect sequence patterns. Age or gender, for example, may affect the pattern of item acquisition in a supermarket or the pattern of medical symptoms experienced by a patient.

Since these attributes are assumed to be independent, their modeling is simply based on extracting their distribution in the source dataset. The frequency of each possible value is calculated for nominal attributes. A preliminary stage of discretization is performed for numerical features. The current discretization is set to 100 bins, but it can be set differently to support various features.
\subsection[Modeling State Dimension]{Modeling State Dimension}
We model state transitions as a multiclass classification task; a model is trained to predict each state's probability of being the next state. This creates an ensemble of models (one model for each state). Voting among predictions of the ensemble members by choosing the maximal probability prediction yields the prediction of the next state. The input for this model includes all of the attributes (except object ID), as well as all other features extracted. We examine four machine learning algorithms for this task: Decision Tree, Random Forest, XGBoost, and Markov model.
\subsection[Modeling Time Dimension]{Modeling Time Dimension}
Transition times are modeled as a prediction task. The input for this model also includes all of the attributes (except object ID) and the other features extracted. The subsequent state feature extracted also serves as input, as it is assumed to affect the transition times (assuming that transitions between two given states have similar durations). After an initial examination of several algorithms for this task, we chose to use the Random Forest algorithm that performed best with our configurations, although other algorithms can be used as well.
\subsection[Modeling Object Dimension]{Modeling Object Dimension}
The object dimension is not modeled, as new objects will be generated with running numbers as IDs. The only step required here is to store the number of objects in the source data, as we wish to preserve the number of objects in the generated data.

\section{Generating Synthetic Data}
\label{section6}
Once each model component has been trained, the various models can be used to generate new data of a similar nature. This is where the last phase (data generation) of PrivGen fits in.

The data generation algorithm is outlined below (Algorithm ~\ref{algo:datageneration}). The input to the data generation algorithm is a trained model ($m$), a sliding window size ($w$), and a number of objects to generate ($n$). The last parameter is the maximal number of iterations (\textit{max-steps}) which determines the maximal number of states in a sequence.

\IncMargin{1em}
\begin{algorithm}[!htb]
\SetKwData{tmp}{tmp}\SetKwData{maxSteps}{max-steps}
\SetKwFunction{GenerateUniqueID}{GenerateUniqueID}\SetKwFunction{SampleStartTime}{SampleStartTime}\SetKwFunction{SampleAttrI}{SampleAttrI}\SetKwFunction{SampleStartState}{SampleStartState}
\SetKwFunction{SampleAttrM}{SampleAttrM}\SetKwFunction{PreProcess}{PreProcess}
\SetKwFunction{GetLastByObject}{GetLastByObject}\SetKwFunction{PredictState}{PredictState}
\SetKwFunction{PredictTranTime}{PredictTranTime}\SetKwFunction{GetLastByWindow}{GetLastByWindow}
\SetKwInOut{Input}{input}\SetKwInOut{Output}{output}
\Input{A trained model $m$,\ the number of objects to be generated $n$,\ sliding window size $w$,\ and maximal number of iterations \maxSteps}
\Output{The generated dataset ${D}'$}
\BlankLine
${D}'\leftarrow$ create $n$ records of the form $(o, t, e, a_1, \ldots, a_m)$\;
${{D}'[o]}\leftarrow$  \GenerateUniqueID()\;
${{D}'[t]}\leftarrow$  \SampleStartTime($m,{D}'$)\;
${{D}'[e]}\leftarrow$  \SampleStartState($m,{D}'$)\;
${{D}'[a_1]}\leftarrow$ \SampleAttrI($m,{D}'$)\; 
$ \vdots$\\
${{D}'[a_n]}\leftarrow$ \SampleAttrM($m,{D}'$)\;
\Repeat{\tmp is empty or \maxSteps has been reached}{
  \tmp  $\leftarrow$ \GetLastByWindow(${D}',w$); \tcp{get $w$ last records of sequences that had not yet reached the ending state}
  \tmp  $\leftarrow$ \PreProcess(\tmp); \\
  {\tmp}$_e$  $\leftarrow$ \GetLastByObject($m$,\tmp); \tcp{get last record of each sequence in tmp}
  {\tmp}[e]  $\leftarrow$ \PredictState($m$,\tmp)\;
  {\tmp}[t]  $\leftarrow$ {\tmp}[t] + \PredictTranTime($m$,\tmp)\;
  Append \tmp to ${D}'$\;
}
\Return ${D}'$
\BlankLine
\caption{Data generation algorithm.}\label{algo:datageneration}
\end{algorithm}
\DecMargin{1em}

First, the algorithm initializes the generated dataset ${D}'$ by creating $n$ records, generating a unique object ID for each record and populating the records with sampled start times, start states, and values for each of its $m$ attributes. It then repeats, sampling the next step (record) for as long as there are sequences with this number of steps (or until meeting the \textit{max-steps} limit). In each iteration, the algorithm stores (in $tmp$) the $w$ last steps of sequences that have not yet reached the ending state. The $w$ steps are needed for the extraction of all relevant features, as performed in the following preprocessing phase in which the last preprocessed record (step) for each object (sequence) in $tmp$ is used for predicting the next state and transition time. $tmp$ is updated with the predicted times and states, and is added to ${D}'$ which will eventually be returned as output.

\section[Time Complexity Analysis]{Time Complexity Analysis}
\label{section7}
The time complexity for extracting the required models for the suggested method is as follows: 1) Modeling each influencing attribute by scanning its $n$ values is $O(n)$, where $n$ is the number of records. 2) Modeling transition times using the Random Forest algorithm requires constructing $k$ trees. The complexity of tree construction is $O(v\cdot n \cdot log(n))$, where $n$ is the number of records, and $v$ is the number of attributes; we assume the worst case of a full depth tree. The input for this model contains all of the attributes, including an attribute for each state, so the complexity is actually $O(m\cdot({v}'+|E|)\cdot n\cdot log(n))$, where $m$ is the percentage of attributes to be considered per tree, ${v}'$ is the number of attributes in the original input, and $|E|$ is the number of states. Therefore, for $t$ trees in the forest we obtain a complexity of $O(t\cdot m\cdot ({v}'+|E|)\cdot n\cdot log(n))$. 3) The complexity of state transition modeling depends on the algorithm chosen for the task. Since a model is created for each state, the number of states ($|E|$) affects the time complexity. For example, in the Random Forest algorithm, when $t$ trees are created per model, the complexity is $O(|E|\cdot t\cdot v\cdot n\cdot log(n))$ for $|E|$ models, each with $t$ trees, based on $v$ attributes, where $v=m\cdot ({v}'+|E|)$. In the XGBoost algorithm, the number of trees $t$ will be replaced with the number of iterations $k$. In a Decision Tree algorithm, $t$ is removed from the computation, as only a single tree is created for each state. A Markov model can be extracted by scanning transitions in all records and calculating the frequencies of each possible transition. Therefore, its complexity is $O(n\cdot |{\rho}|)$, where $\rho$ is the number of unique transitions in the data, and $\rho \leq {|E|}^2$. 

The time complexity for the data generation algorithm is influenced by each of these steps: 1) Generating $o$ records with $a$ influencing attributes is $O(o\cdot a)$, where $o$ is the number of objects. 2) Predicting the next time for each step of $o$ sequences, using the Random Forest algorithm, is $O(o\cdot t\cdot log(n))$ for searching $t$ trees per prediction. 3) Predicting the next state for each step for $o$ objects, using the Random Forest algorithm with $t$ trees (for example, when we need to predict a score for each of the $|E|$ states), is $(o\cdot |E|\cdot t\cdot log(n))$. Steps 2 and 3 are repeated for $p$ iterations (each iteration is a step in the sequence), so the overall data generation complexity is $O(p\cdot o\cdot |E|\cdot t\cdot log(n))$.

\section[Experimental Evaluation]{Experimental Evaluation}
\label{section8}
In Section~\ref{section3}, we demonstrated how PrivGen protects privacy. In view of the trade-off between maintaining privacy and utility, in this section we evaluate the data utility achieved by our suggested method. 

Our experiments are divided into two parts. In the first set of experiments we evaluated PrivGen, by comparing different algorithms for the task of modeling transitions between states and evaluating how well various characteristics of the source data were maintained.
In the second set of experiments, we compared PrivGen’s results with the results of the most similar techniques for publishing privacy preserving sequences or privacy preserving sequence pattern extraction.
\subsection{Evaluation Methods}
As our focus in this study is preserving the sequential nature of the synthetic data, we evaluate quality using sequence-based tasks. We tune and evaluate the method according to its accuracy in the task of state transition classification as well as according to its RMSE in the task of time transition prediction. Moreover, we measure the differences between accuracy / RMSE achieved when using a source-data trained model and a synthetic-data trained model and applying each on the source data. The idea behind this metric is that low differences between results of a model that was trained on the source data and results of a model that was trained on the synthetic data, indicate the effectiveness of the synthetic data.
The last evaluation we use for measuring the quality of the sequential nature of the synthetic data, is by measuring the precision of matching top-k frequent subsequences in the source and synthetic data, as detailed in Subsection~\ref{section8-2}. This measure enables the comparison of PrivGen with state of the art techniques as it was used to evaluate the other methods. 

\subsection{Evaluating Quality of the Data Generated}
We applied PrivGen to two real-world datasets; the first was taken from the MIMIC-III
(Medical Information Mart for Intensive Care III) database~\cite{Lee2011}, which is a large, freely available database comprised of de-identified health-related data associated with over 40,000 patients who stayed in Beth Israel Deaconess Medical Center’s critical care units between 2001 and 2012. We used the $TRANSFERS$ table which contains physical locations for patients throughout their hospital stay. The second dataset, Taxis, contains a complete year (from 1/07/2013 to 30/06/2014) of trajectories for all of the 442 taxis running in Porto, Portugal~\cite{moreira2013predicting}. We initially reconstructed the dataset as a collection of signals rather than a collection of paths. We then filtered it to an area of about 50 square kilometers ($8.85\times5.5$), and each coordinate pair was discretized to a system of grids measuring about one square kilometer. Finally, we reduced the transmission rate in our sample to one per minute, in addition to the first and last signals of each trip. Three data samples of different sizes (25,000, 100,000, and 250,000 records) taken from the two datasets were used to evaluate PrivGen. Table~\ref{tab:data1} describes the samples.

\begin{table*}[!htb]
\caption{Data description for the first set of experiments.}
\label{tab:data1}
\centering
\begin{tabular}{llll} 
\toprule
&\multicolumn{3}{c}{Taxis}\\ 
\cmidrule{2-4}
Records&\multicolumn{1}{l}{25K} & \multicolumn{1}{l}{100K} & \multicolumn{1}{l}{250K} \\ 
\midrule
Number of sequences&\multicolumn{1}{l}{2410}&\multicolumn{1}{l}{9779}&\multicolumn{1}{l}{24488}\\
Number of unique states& \multicolumn{1}{l}{48}&\multicolumn{1}{l}{48}&\multicolumn{1}{l}{48}\\
min date& \multicolumn{1}{l}{2013-7-1 3:00}  & \multicolumn{1}{l}{2013-7-1 03:00} & \multicolumn{1}{l}{2013-7-1 3:00}\\
max date  & \multicolumn{1}{l}{2013-7-1 17:36}   & \multicolumn{1}{l}{2013-7-3 12:44}  & \multicolumn{1}{l}{2013-7-6 8:27} \\
sequence size (min $\pm$ std) & $9 \pm 7$& $9 \pm 7$& $9 \pm 7$\\
sequence duration (sec)& $576 \pm 530$& $564 \pm 508$& $566 \pm 551$\\
transfer time (sec) & $55 \pm 70$&$55 \pm 64$& $55 \pm 72$\\
\midrule
&\multicolumn{3}{c}{MIMIC-III}\\ 
\cmidrule{2-4}
Records&\multicolumn{1}{l}{25K} & \multicolumn{1}{l}{100K} & \multicolumn{1}{l}{250K} \\ 
\midrule
Number of sequences&\multicolumn{1}{l}{4370}&\multicolumn{1}{l}{17418}&\multicolumn{1}{l}{44361}\\
Number of unique states& \multicolumn{1}{l}{17}&\multicolumn{1}{l}{17}&\multicolumn{1}{l}{17}\\
min date& \multicolumn{1}{l}{2100-6-28 19:31} & \multicolumn{1}{l}{2100-6-14 04:56}  & \multicolumn{1}{l}{2100-6-7 20:00}\\
max date  & \multicolumn{1}{l}{2209-2-14 21:59}  & \multicolumn{1}{l}{2209-2-14 21:59}   & \multicolumn{1}{l}{2210-8-24 19:43} \\
sequence size (min $\pm$ std)&  $5 \pm 5$& $5 \pm 5$& $5 \pm 4$\\
sequence duration (sec)& $3523 \pm 11573$& $3712 \pm 12027$& $2963 \pm 10360$\\
transfer time (sec) & $616 \pm 4403$& $647 \pm 4503$& $526 \pm 3934$\\
\bottomrule
\end{tabular}
\end{table*}

In order to comprehensively evaluate PrivGen, we examined the quality of the data generated. We applied the method to samples of our two datasets. Five different state transition models were compared (random selection, Markov model, CART, Random Forest, and XGBoost). The first algorithm, random selection, is included in order to demonstrate any improvement observed over random results. The methods described were implemented in Python (code is available at \url{https://github.com/sigal-shaked/SequenceAnonymizer}). We used 10-fold cross-validation for the evaluation, training a model for each algorithm based on the training set and generating 1000 anonymized sequences accordingly. We then trained a model for each algorithm based on the anonymized data and compared it to the previous model by evaluating its performance with the source test set.

\subsubsection{Maintaining State Transitions}
Figure~\ref{fig:fig6} presents the accuracy of the state transition model for each of the five algorithms compared. The blue bars represent the results achieved for the source test set based on a model trained on the source training set, and the green bars represent the results achieved for the source test set based on a model trained on the generated anonymized data. Therefore, the green bars should be similar to the blue bars to indicate that state transition patterns were preserved in the anonymized version of the data. As shown in the figure, basing the model on the anonymized version of the data reduces the accuracy by $0\%$ to $20\%$, which leaves us with about $50\%$ accuracy for the examined models; this is twice as much as that obtained in a random sample from the MIMIC-III data and 10 times better that obtained in a random sample from the Taxis dataset.

However, as can be seen in Fig.~\ref{fig:fig6}, there is no significant advantage in using machine learning algorithms that have been enriched with broad input features compared to the classic Markov model, at least for the tested datasets. Note that our tested datasets do not contain features regarding the object itself, but only intrinsic features that can be extracted from the time dimension and from the sequence of states. For this scenario the Markov model seems to be the best option.

\begin{figure*}[htbp!] 
\centering    
\includegraphics[width=\linewidth ]{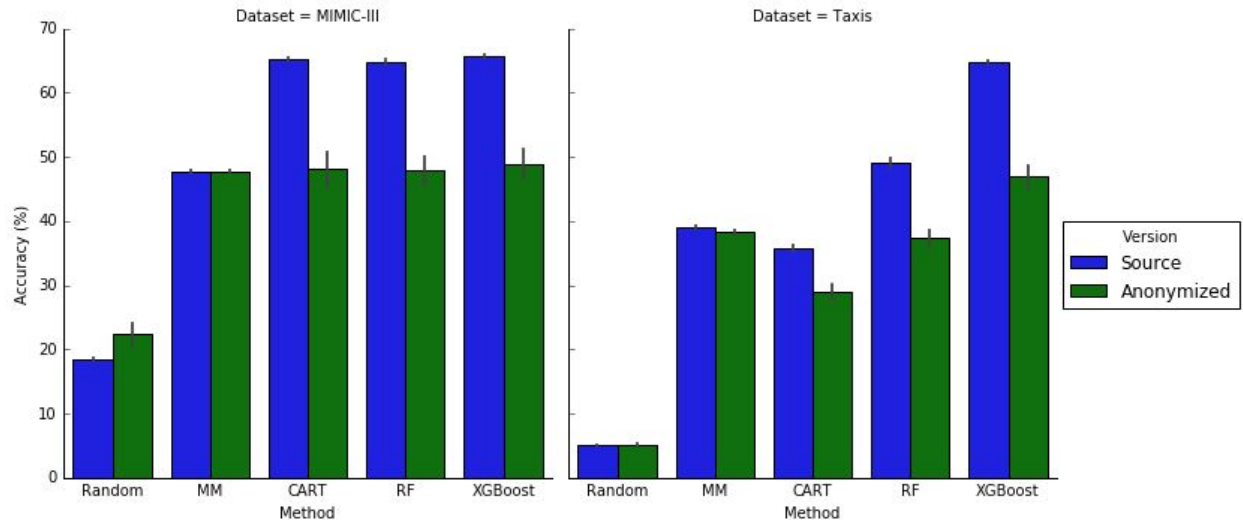}
\caption[Accuracies of state transition models, trained on source vs. anonymized data]{Accuracy of the state transition model using the following five algorithms:  Random Selection, Markov model (MM), Decision Tree (CART), Random Forest (RF), and XGBoost. The two colors distinguish between training a model on the source data (blue) and training a model on the anonymized data (green).}
\label{fig:fig6}
\end{figure*}
\subsubsection{Maintaining Time Transitions}
Fig.~\ref{fig:fig7} presents the RMSE of the time transition model. Here again, the blue bars represent the results achieved based on a model trained on the source training set, and the green bars represent the results based on a model trained on the generated anonymized data. A similar result for the source and anonymized models indicates the preservation of the patterns of the transition times. Indeed, we can see a similar amount of error in both cases, suggesting the preservation of transition time patterns using the anonymized data.
\begin{figure}[htbp!] 
\centering    
\includegraphics[width=\linewidth ]{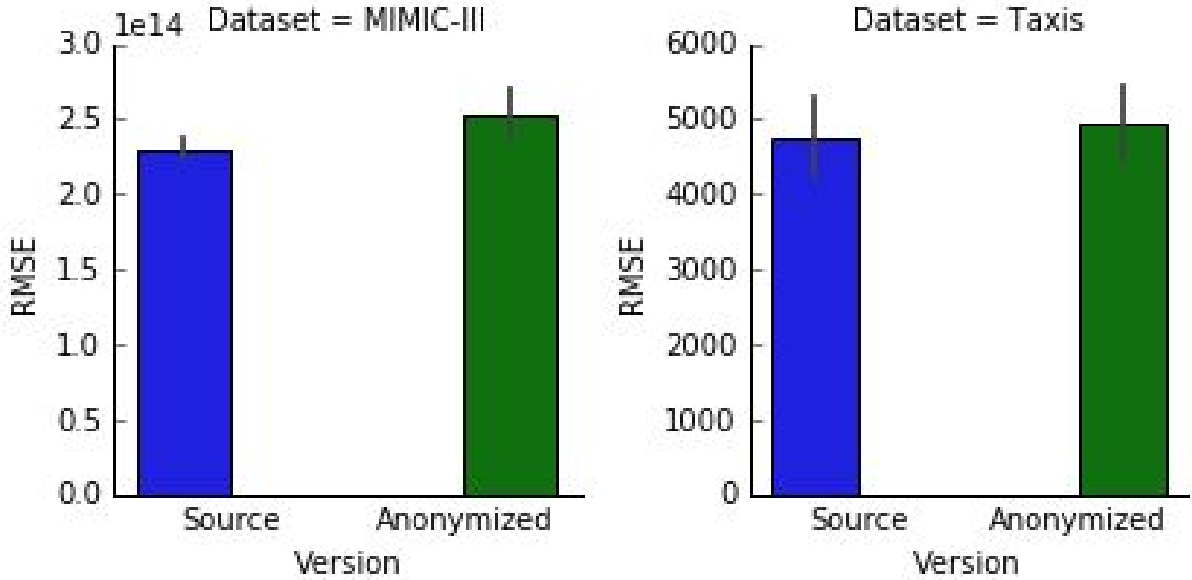}
\caption[RMSE of the transition times model]{RMSE of the transition times model. The two colors distinguish between training a model on the source data (blue) and training a model on the anonymized data (green).}
\label{fig:fig7}
\end{figure}

\subsubsection{Maintaining General statistics}
We examined the preservation of statistics that are relevant to the time dimension; the average sequence length is measured by the number of records associated with each sequence, and the average sequence duration is measured by the sum of the transition times of records associated with a sequence. As can be observed in Fig.~\ref{fig:fig8}, these statistics are well maintained, except for the mean sequence duration in the MIMIC-III data where the data is spread very widely over time with a large standard deviation between transition times. Therefore, in situations where no time regularities exist, especially when there are long transition times, it is very difficult to maintain the time dimension accurately.
\begin{figure}[htbp!] 
\centering    
\includegraphics[width=\linewidth ]{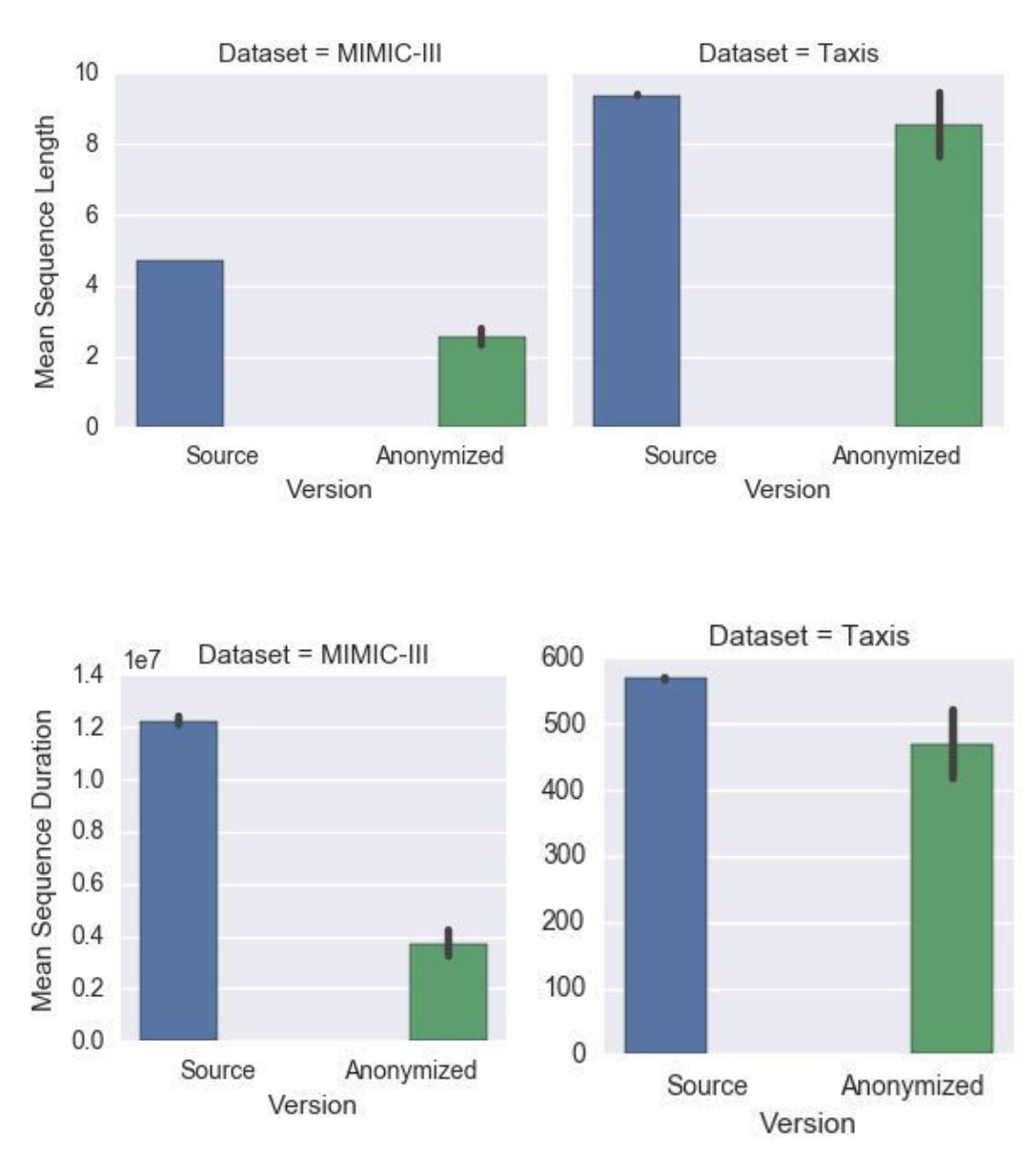}
\caption[Mean sequence length and duration for source and anonymized data versions]{Mean sequence length (above) and mean sequence duration (below) for source (blue) and anonymized (green) data versions.}
\label{fig:fig8}
\end{figure}

We now examine the preservation of state-related statistics. Fig.~\ref{fig:fig9} presents state frequencies. In both datasets, we see some similarity between state frequencies. The Markov model preserves the state frequencies best for both datasets. XGBoost also performed well for the Taxis dataset.
\begin{figure}[htb!] 
\centering    
\includegraphics[width=\linewidth ]{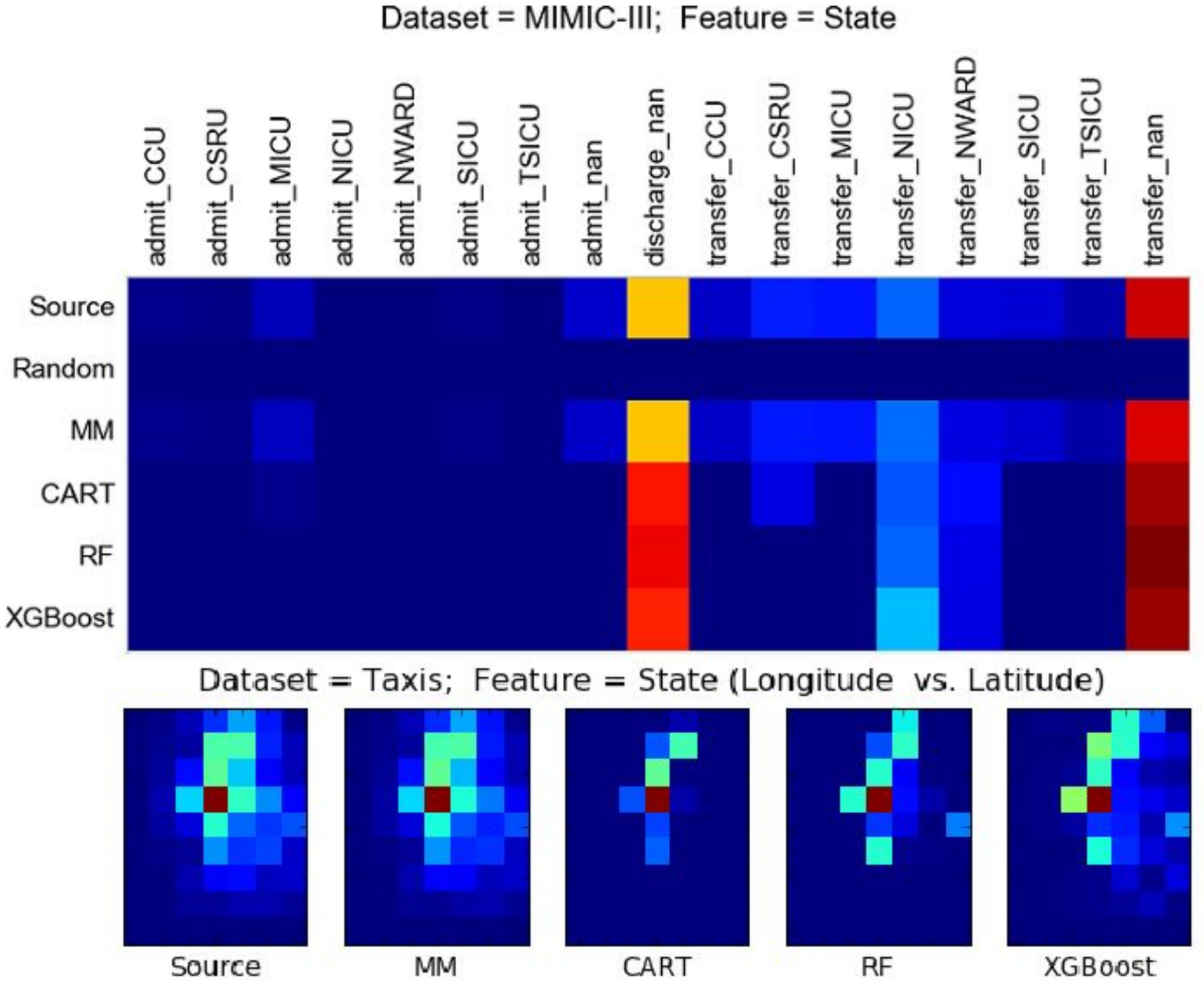}
\caption[Hitmaps of state frequencies]{Heatmaps of state frequencies. For the MIMIC-III dataset (top), each row in the map represents the state space, and the colors indicate the states' frequencies. For the Taxis dataset (bottom), the states are presented according to their positions (x and y coordinates). The left matrix describes the source version of the data, and the other matrices describe the anonymized versions generated using the various algorithms.}
\label{fig:fig9}
\end{figure}
\subsection{Comparing PrivGen with State of the Art Techniques}
\label{section8-2}
In this set of experiments we compare the data utility achieved by our suggested method, with the utility achieved by state of the art techniques for privacy preserving sequence data publishing. Since our method is not intended to preserve data at the record level but rather to preserve higher level data characteristics, with an emphasis on sequencing patterns, we evaluate our method using a fundamental analysis task in sequence data mining, as was used for the evaluation of state of the art methods for sequence data publishing \cite{chen2012differentially_1,zhang2016privtree,li2018privts}.

The examined task is to identify the top-k frequent subsequences in $D$, that is, the $k$ subsequences that appear the largest number of times in $D$. We calculate the precision of the top-k subsequences returned by the published data ${D}'$ as follows, when considering the most frequent subsequences in the original data as ground truth:

\begin{equation}
precision=\frac{topK(D)\cap topK(D')}{k}\end{equation}

We use two real sequence datasets: mooc\footnote{\url{https://www.kddcup2015.com/competition/kddcup2015}} and msnbc \cite{sanchez2016utility}. After merging co-occurring events, mooc contains 79,186 learner behavior sequences on MOOC platforms, and the behaviors are divided into seven categories: task work, video viewing, access to other objects, access to a wiki course, access to the course forum, navigation to another part, and closing the webpage. The msnbc dataset consists of 989,818 URL category sequences, each of which matches the user’s browsing history at msnbc.com over a 24-hour period. Table~\ref{tab:3} shows the main statistics for the mooc and msnbc datasets.

\begin{table*}[!htb]
\centering
\caption{Data description for the second set of experiments.}
\begin{tabular}{lccccl}
\hline\noalign{\smallskip}
Dataset & Sequences & Records & Seq-size 
& Seq-size
& Description\\
 &  &  & (avg) 
& ($80\%$)
& \\
\noalign{\smallskip}
\hline
\noalign{\smallskip}
$mooc$ & 79,186 & 6,234,900 & 7 & 105 & Users' behavior sequences on a\\
 & & & & &MOOC website\\
$msnbc$ & 989,818 & 4,698,794 & 4.75 & 16 & Users' Web navigation histories\\
 & & & & &on a news portal\\
\hline
\label{tab:3}
\end{tabular}
\end{table*}

We compare PrivGen against three differential privacy preserving techniques: PrivTree \cite{zhang2016privtree}, n-gram \cite{Chen2012}, and EM \cite{mcsherry2007mechanism}. These methods require that the maximum sequence length in the input data be bounded by a constant, which we set to be roughly the $95\%$ quantile of the sequence lengths in the input data, as done in previous work \cite{Chen2012,zhang2016privtree}. N-gram is known as the state of the art solution for privacy preserving sequence data publishing, and it is based on a variable length n-gram model. It requires a predefined threshold $n_{max}$ for the maximum length of n-grams; we set $n_{max}$ = 5, as suggested in \cite{chen2012differentially_1}. PrivTree is a histogram construction algorithm that adopts hierarchical decomposition but completely eliminates the dependency on a predefined $n_{max}$. EM is a standard application of the exponential mechanism \cite{mcsherry2007mechanism} in our context, as described in \cite{zhang2016privtree}. 

We set PrivGen to generate smaller synthesized data versions with $10\%$ of the sequences in the original data, and the maximal sequence size set to the $80\%$ quantile of the sequence lengths in the original data, as presented in Table~\ref{tab:3}. For each of the two datasets examined we generated 10 synthesized versions using PrivGen and performed the task of preserving the top-k subsequences for each version; we report the average precision values for this task. 

\begin{figure}[htbp!] 
\centering    
\includegraphics[width=\linewidth ]{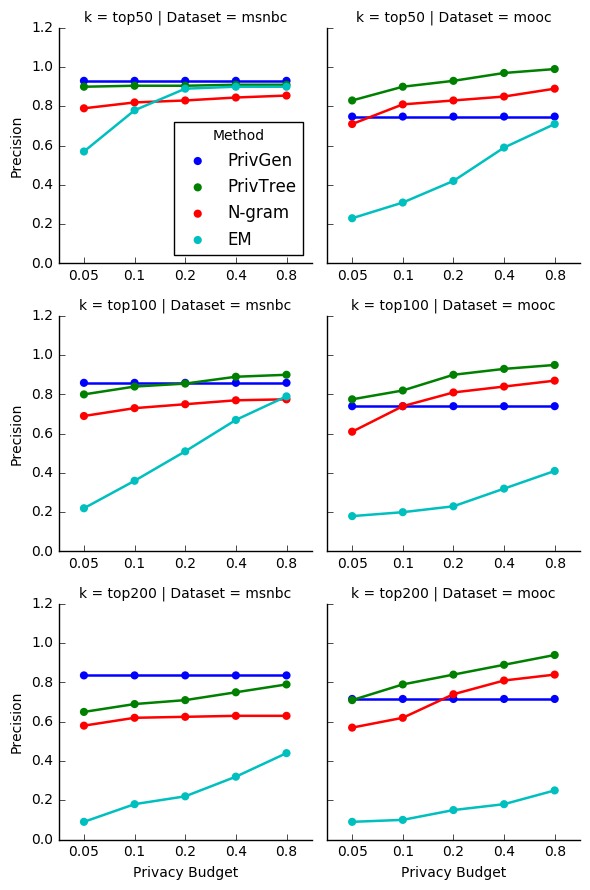}
\caption[Precision of top-k frequent subsequence versus different privacy budgets]{Precision of top-k frequent subsequence mining (y-axis) versus different privacy budgets (x-axis), for the two datasets examined.}
\label{fig:fig10}
\end{figure}
Figure~\ref{fig:fig10} shows the precision for top-k subsequence mining for each method as a function of the privacy budget ($\epsilon$). The precision of PrivGen remains unchanged for all $\epsilon$ values, since is does not enforce differential privacy. In the msnbc dataset, for all three top-k sizes examined, PrivGen has a significantly higher precision than the other techniques; for the top-100 this is only true for $\epsilon$ values lower than 0.2. That means that for generating datasets that guarantee high privacy levels (lower $\epsilon$) PrivGen preserves data utility better in the sense of preserving the most frequent subsequences. In the mooc dataset, PrivGen’s precision is close to that of the other techniques for low $\epsilon$ values. It seems that PrivGen better preserves frequent subsequences in the case of short sequences such as in the msnbc dataset, but when the sequences are longer, the precision is compromised. It may be necessary to use a higher order Markov model instead of the second-order model used in this work, in order to better capture patterns among longer sequences. Yet, as we seek to include more frequent subsequences (top-200), PrivGen shows slightly better results for low $\epsilon$ values, even in this more complex scenario.

Although it cannot be used for sequence publication, we should also mention PrivTS \cite{li2018privts}, a recently proposed solution for mining frequent T-sequences with a $\epsilon$-differential privacy guarantee. As PrivTree was found to outperform PrivTS in preserving the most frequent subsequences, PrivGen could also be considered a potential solution for mining frequent T-sequences with privacy guarantees.

Based on the results presented in Fig.~\ref{fig:fig10}, we conclude that PrivGen is a more preferable solution than $\epsilon$-differential privacy-based solutions for low $\epsilon$ values or for protecting privacy regardless of the $\epsilon$ settings.
\section{Conclusions}
\label{section9}
In this paper we studied the challenges and solutions for anonymizing sequence data, defined as a set of ordered data points. Sequence data is very common in many information systems, representing widespread personal data such as medical records, behavioral traces, location traces, and so on. This makes finding privacy solutions to sequential data extremely important. While sequential data is very common, there is a shortage of privacy solutions for sequential data. Applying privacy solutions to sequential data raises some challenges in light of the multidimensional nature of this data, as instances constructed of the ordered data might be highly unique. 

Synthetic datasets generated via differential privacy based techniques were found to be better than traditional anonymization techniques and are recommend to be used instead of anonymized ones when possible, to avoid the arms race between deidentification and reidentification. However, we point the shortcoming of differential privacy, the current leading anonymization technology. The mainstream implementation of differential privacy, such as adding random noise to the data or suppressing properties, do not work well with detailed sequences. Moreover, determining the privacy threshold, which is the key to achieving a balance between privacy and utility, is yet an unsolved problem. 

In this paper, we introduced PrivGen, a new method for generating sequential data based on patterns learned from the source data. The method is based on a number of components, each modeling a different dimension of the data, and together they enable the preservation of a wide variety of patterns and behaviors identified in the source data. We exploit the inverse relationship between the sequence frequency and its reconstruction probability to demonstrate that the synthetic sequences generated by PrivGen guarantee privacy for individuals. We conducted extensive experiments which demonstrated PrivGen’s ability to generate versions of the data that maintain the sequential model, as well as explicit and hidden attribute statistics of the original data. Our experimental results demonstrate that for strict privacy requirements, which are represented by low $\epsilon$ values, PrivGen provides comparable performance to that of state of the art techniques in terms of preserving the top frequent subsequences and can therefore be used when there is uncertainty regarding the amount of the privacy budget. Moreover, in scenarios where sequences are relatively short (like in the msnbc dataset), PrivGen outperforms state of the art techniques for low $\epsilon$ values and therefore can be used for publishing sequences of higher quality while preserving individuals’ privacy.

In this paper we provide a theoretical analysis of the data synthetization process, assessing the re-identification risk based on the uniqueness of the source data and the reconstruction probability of a source record. An interesting future research that may advance existing privacy solutions would be to examine how to incorporate this concept within the differential privacy solution so that noise is adjusted to the level of re-identification risk. It would also be interesting to explore how deep architecture-based methods could be used for generating multidimensional sequence data.

\bibliographystyle{unsrt}  
\bibliography{template}

\end{document}